\def\identity{\leavevmode\hbox{\small1\kern-3.8pt\normalsize1}}
\newtheorem{lemma}{Lemma}
\newcommand{\ket}[1]{\left | #1 \right\rangle}
\newcommand{\bra}[1]{\left \langle #1 \right |}
\newcommand{\Tr}{\mathrm{Tr}}
\newcommand{\avg}[1]{\left\langle #1\right\rangle}
\newcommand{\proj}[1]{\ket{#1}\bra{#1}}
\renewcommand{\epsilon}{\varepsilon}
\begin{document}

\title{Bell monogamy relations in arbitrary qubit networks}

\author{M.~C.~Tran}
\affiliation{School of Physical and Mathematical Sciences, Nanyang Technological University, 637371 Singapore}
\affiliation{Joint Center for Quantum Information and Computer Science, University of Maryland, College Park, Maryland 20742, USA}
\affiliation{Joint Quantum Institute, NIST/University of Maryland, College Park, Maryland 20742, USA}

\author{R.~Ramanathan}
\affiliation{Laboratoire d'Information Quantique, Universit\'e Libre de Bruxelles, Belgium}

\author{M.~McKague}
\affiliation{School of Electrical Engineering and Computer Science, Queensland University of Technology, Australia}

\author{D.~Kaszlikowski}
\affiliation{Centre for Quantum Technologies, National University of Singapore, 117543 Singapore, Singapore}
\affiliation{Department of Physics, National University of Singapore, 117543 Singapore, Singapore}

\author{T.~Paterek}
\affiliation{School of Physical and Mathematical Sciences, Nanyang Technological University, 637371 Singapore}
\affiliation{MajuLab, CNRS-UCA-SU-NUS-NTU International Joint Research Unit, UMI 3654 Singapore, Singapore}

\begin{abstract}
Characterizing trade-offs between simultaneous violations of multiple Bell inequalities in a large network has important physical consequences but is computationally demanding.
We propose a graph-theoretic approach to efficiently produce Bell monogamy relations in arbitrary arrangements of qubits.
All the relations obtained for bipartite Bell inequalities are tight and leverage only a single Bell monogamy relation.
This feature is unique to bipartite Bell inequalities, as we show that there is no finite set of such elementary monogamy relations for multipartite inequalities.
Nevertheless, many tight monogamy relations for multipartite inequalities can be obtained with our method as shown in explicit examples.
\end{abstract}

\maketitle

%%%%%%%%%%%%%%%%%%%%%%%%%%%%%%%%%%%%%%%%%%%%%%%%%%%%%%%%%%%%%%%%%%%%%%%%%%%%%%%%%
\section{introduction}
Bell monogamy relations describe the degree of simultaneous violation of multiple Bell inequalities.
They find applications in foundations of physics and quantum information.
On the fundamental side: 
(i) They were shown to exist in every no-signaling theory~\cite{PhysRevLett.87.117901,PhysRevA.65.012311,PhysRevA.71.022101,PhysRevA.73.012112,ProcRSocA.465.59,PhysRevLett.102.030403,PhysRevA.82.032313,PhysRevA.90.052323,PhysRevLett.113.210403}, but the principle of no-signaling alone does not single out the monogamies derived within the quantum framework~\cite{arXiv:0611001,PhysRevLett.106.180402,arXiv:1704.03790}.
They are therefore a natural test bed for candidate principles underlying quantum formalism~\footnote{Individual bipartite Bell inequalities already played an important role in this context, see e.g.~\cite{Pawlowski2009,ML}. 
However, intrinsically multi-partite principles are required~\cite{PhysRevLett.107.210403}, see e.g.~\cite{NatComms.4.2263}. 
Note that for testing these principles tight monogamy relations involving both bipartite and multipartite Bell inequalities are required.}.
(ii) They were also shown to play a role in quantum-to-classical transition~\cite{PhysRevLett.107.060405},
where tight Bell monogamy relations for multipartite inequalities reduce the number of particles for which the classical description emerges.
On the practical side: (i) They were used to obtain the upper bound on the average shrinking factor of cloning machines~\cite{PhysRevLett.102.030403}, which will here be refined~\footnote{Ref.~\cite{PhysRevLett.102.030403} derives the upper bound on the average shrinking factor from the principle of no-signaling. Our monogamy relations are derived within quantum formalism and put the upper bound on the average \emph{squared} shrinking factor. This is a stronger result, especially for anisotropic cloning machines~\cite{Cerf00}.}.
(ii) They were also shown to improve device-independent tasks such as randomness amplification and quantum key distribution~\cite{PhysRevA.82.032313,PhysRevA.90.052323}.
In the latter case, it is the existence of the Bell monogamy relation that allows for secure cryptography even in the presence of signaling~\cite{PhysRevA.82.032313}.
The method we propose here will contribute to generalizations of these results to multipartite cases.

Despite their importance, only a handful of Bell monogamy relations have been derived within the quantum formalism~\cite{PhysRevLett.87.117901,PhysRevA.65.012311,arXiv:0611001,PhysRevLett.106.180402,arXiv:1704.03790}.
A powerful approach to generate tight relations is given by the correlation complementarity~\cite{PhysRevLett.106.180402,PhysRevA.72.022340,JMathPhys.49.062105,NewJPhys.12.025009}.
The approach involves dividing relevant observables into sets of mutually anti-commuting ones.
The complexity of this task grows exponentially with the number of Bell parameters
and therefore renders correlation complementarity inefficient for large networks.
In fact, an efficient method to generate tight monogamy relations in arbitrary arrangements of a large number of qubits is not yet available. 
We propose such a method in this paper.

For every collection of bipartite Bell parameters, our method yields a corresponding tight monogamy relation.
The approach leverages a single Bell monogamy relation (derived in Ref.~\cite{arXiv:0611001}) multiple times.
We, therefore, name this monogamy relation as \emph{elementary}.
We then investigate if the method generalizes to multipartite inequalities.
It turns out that the situation is far more complicated already for tripartite inequalities.
We construct a Bell scenario with an increasing number of observers for which the method produces a Bell monogamy relation that is not tight,
even if all elementary relations for smaller numbers of observers are taken into account.
We conclude that already in tripartite scenario there is no finite set of elementary relations.
Nevertheless, the method does produce many tight monogamies and hence is valuable also in the multipartite case.

%%%%%%%%%%%%%%%%%%%%%%%%%%%%%%%%%%%%%%%%%%%%%%%%%%%
%%%%%%%%%%%%%%%%%%%%%%%%%%%%%%%%%%%%%%%%%%%%%%%%%%%
%%%%%%%%%%%%%%%%%%%%%%%%%%%%%%%%%%%%%%%%%%%%%%%%%%%
%%%%%%%%%%%%%%%%%%%%%%%%%%%%%%%%%%%%%%%%%%%%%%%%%%%
%%%%%%%%%%%%%%%%%%%%%%%%%%%%%%%%%%%%%%%%%%%%%%%%%%%

\section{Bell inequalities}
We focus on a complete set of correlation Bell inequalities for $m$ observers, each choosing between two measurement settings and obtaining a dichotomic $\pm 1$ outcome~\cite{PhysRevA.64.032112,PhysRevLett.88.210401}.
The set is equivalent to a single general Bell inequality $\mathcal{B}_{1\dots m} \le 1$ \footnote{See e.g. Eq. (5) in Ref.~\cite{PhysRevLett.88.210401}.}, where the following upper bound on the Bell parameter $\mathcal{B}_{1\dots m}$ was also derived:
\begin{equation}
	\mathcal{B}_{1\dots m}^2 \le \sum_{k_1 = x,y} \dots \sum_{k_m = x,y} T_{k_1 \dots k_m}^2\equiv \mathcal{T}_{1\dots m}^2.
	\label{T_BOUND_B}
\end{equation}
The summation is over orthogonal local directions $x$ and $y$ which span the plane of local settings and $T_{k_1 \dots k_m} = \Tr\left(\rho.\sigma_{k_1}\otimes\dots\otimes\sigma_{k_m}\right)$ are the quantum correlation functions of state $\rho$.
Therefore, if $\mathcal{T}_{1\dots m}^2 \le 1$ the quantum correlations admit a local hidden variable model (for measurements in the $xy$ plane).
The condition is also necessary and sufficient if $m=2$~\cite{PhysLettA.200.340}.
We shall use it as a building block for our monogamy relations.

%%%%%%%%%%%%%%%%%%%%%%%%%%%%%%%%%%%%%%%%%%%%%%%%%%%
%%%%%%%%%%%%%%%%%%%%%%%%%%%%%%%%%%%%%%%%%%%%%%%%%%%
%%%%%%%%%%%%%%%%%%%%%%%%%%%%%%%%%%%%%%%%%%%%%%%%%%%
%%%%%%%%%%%%%%%%%%%%%%%%%%%%%%%%%%%%%%%%%%%%%%%%%%%
%%%%%%%%%%%%%%%%%%%%%%%%%%%%%%%%%%%%%%%%%%%%%%%%%%%
\section{Bipartite inequalities}
Let us first consider trade-offs between simultaneous violations of a set of bipartite Bell inequalities.
Each bipartite Bell parameter may involve two out of $n$ observers, each having access to a single qubit. 
The problem can be represented by a graph with $n$ vertices denoting the $n$ observers and edges denoting the relevant Bell parameters.
An example of such a graph is given in Fig.~\ref{FIG_BIPARTITE}.
The simplest scenario of Bell monogamy is when three observers try to simultaneously violate two Bell inequalities.
The statement of Bell monogamy is that the simultaneous violation is impossible
and the quantitative quantum relation reads \cite{arXiv:0611001}:
\begin{equation}
\mathcal{B}_{12}^2 + \mathcal{B}_{13}^2 \le 2.
\label{3MON}
\end{equation}
This monogamy relation is a straightforward application of the correlation complementarity (see Appendix~\ref{APP_CC}).

\begin{figure}[t]
\includegraphics[width=0.45\textwidth]{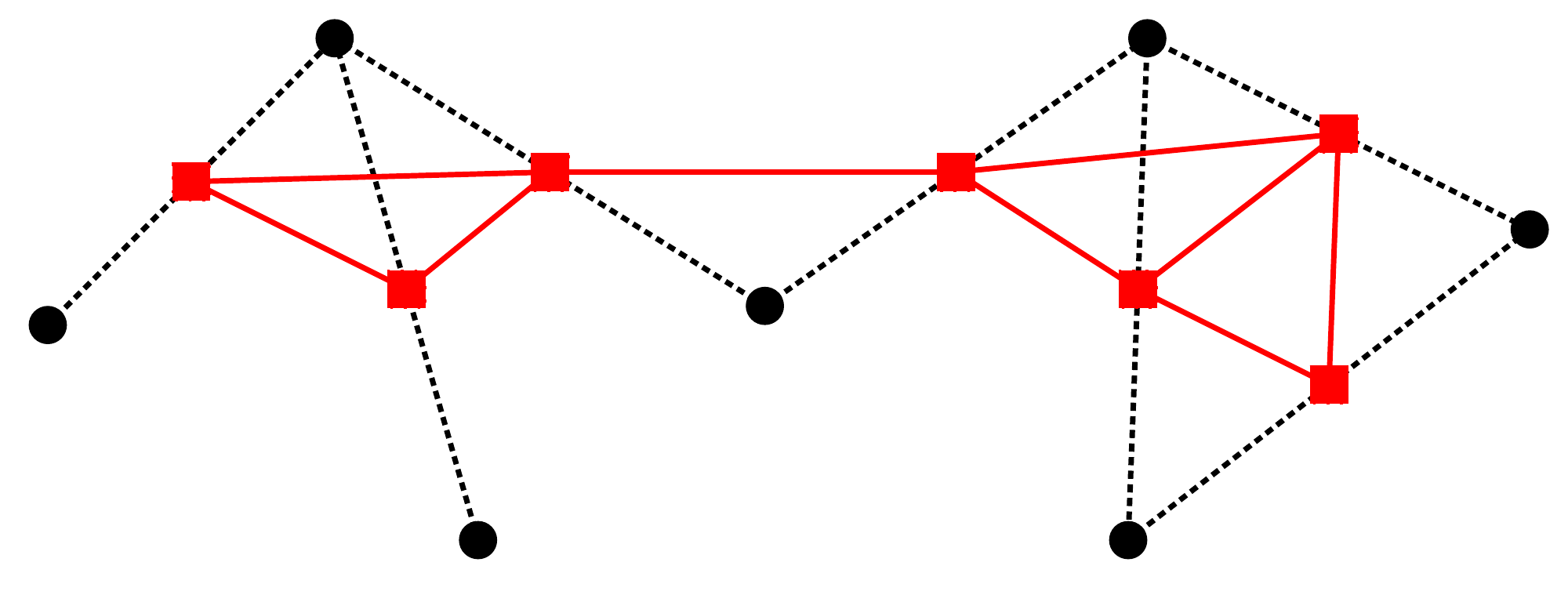}
\caption{Tight Bell monogamy of bipartite Bell inequalities.
The vertices (circles) of the black dotted graph $G$ represent different observers and edges connect observers who test whether a bipartite Bell inequality is violated.
The red solid graph, with squares as vertices, is a line graph of $G$. 
Its vertices represent Bell inequalities and edges connect Bell inequalities which share a common observer.
Properties of this construct determine tight Bell monogamy relations (\ref{BI_MONO}).}
\label{FIG_BIPARTITE}
\end{figure}

For a general graph, one can in principle also apply correlation complementarity to find tight monogamy relations.
The method requires grouping relevant observables into mutually anti-commuting sets, which is computationally demanding.
Instead, we propose the following simple method to derive a tight Bell monogamy relation for every graph.
Denote by $G$ the graph with observers represented by vertices and Bell inequalities by edges.
A line graph $L$ of the initial graph $G$ is constructed by placing vertices of $L$ on every edge of $G$,
and by connecting the vertices of $L$ whenever the corresponding edges of $G$ share a vertex (Fig.~\ref{FIG_BIPARTITE}).
The properties of the line graph determine the Bell monogamy relations.
Note that Bell inequalities are represented by vertices of $L$ and edges of $L$ provide information whether two Bell inequalities share a common observer.
In other words, for every edge of $L$ we have monogamy relation (\ref{3MON}), and summing them up gives a general monogamy
\begin{equation}
\sum_{v \in L} d_v \mathcal{B}_v^2 \le 2 \epsilon,
\label{BI_MONO}
\end{equation}
where the sum is over the vertices of $L$, $d_v$ denotes the number of edges incident to the vertex $v$,
$\mathcal{B}_v$ is the Bell parameter associated with vertex $v$ and $\epsilon$ is the total number of edges in $L$.
The factor of $2$ comes from the monogamy relation (\ref{3MON}).
We shall refer to this method as the \emph{averaging} method.

The general monogamy relation (\ref{BI_MONO}) turns out to be tight, i.e. the bound cannot be any smaller.
This follows from the handshaking lemma that for any finite undirected graph, $\sum_v d_v = 2 \epsilon$.
This corresponds to $\mathcal{B}_v = 1$ for all the vertices of $L$, achieved e.g. by the state $\ket{\uparrow \dots \uparrow}$, where all the spins are aligned along the $x$ axis and the measurements are all $\sigma_x$.

We emphasize that this construction is general and surprisingly simple.
It applies to arbitrary graphs, i.e. an arbitrary number of observers measuring an arbitrary configuration of the bipartite Bell inequalities while in the process only monogamy relation (\ref{3MON}) is utilized.
We therefore term the monogamy relation (\ref{3MON}) \emph{elementary}.

On a side note, the elementary relation (\ref{3MON}) has a remarkable property that all mathematically allowed values of $\mathcal{B}_{12}$ and $\mathcal{B}_{13}$ that saturate it are physically realizable~\cite{arXiv:0611001}.
The general monogamy relation (\ref{BI_MONO}) does not share this property as simply seen by considering the triangle graph: 
in this case (\ref{BI_MONO}) gives the bound of $3$, while each individual Bell expression can take at most the maximum Tsirelson value of $\sqrt{2}$.
However, one may ask if the set defined by the \emph{intersection} of elementary relations (\ref{3MON}) contains values of Bell parameters that are all physically achievable.
We show in 
the Appendix~\ref{APP_SEC_SOLID} 
examples of configurations where all the points in the intersection are indeed realized in quantum physics.

A natural question is whether the averaging method generalizes to multipartite Bell monogamy, i.e. $m>2$.
In particular, we ask if there exists an elementary monogamy relation, or a finite set of elementary monogamy relations, from which tight monogamy relations could be derived in an arbitrary scenario.
The answer is more complex even for tripartite Bell inequalities.
On one hand, there are simple monogamy relations averaging which results in tight monogamy relations.
But on the other hand, there are Bell scenarios where tight monogamy relations cannot be obtained from simpler relations.
We now discuss them in more detail.

%%%%%%%%%%%%%%%%%%%%%%%%%%%%%%%%%%%%%%%%%%%%%%%%%%%
%%%%%%%%%%%%%%%%%%%%%%%%%%%%%%%%%%%%%%%%%%%%%%%%%%%
%%%%%%%%%%%%%%%%%%%%%%%%%%%%%%%%%%%%%%%%%%%%%%%%%%%
%%%%%%%%%%%%%%%%%%%%%%%%%%%%%%%%%%%%%%%%%%%%%%%%%%%
%%%%%%%%%%%%%%%%%%%%%%%%%%%%%%%%%%%%%%%%%%%%%%%%%%%

\section{Tripartite inequalities}
The graph-theoretic approach from the previous section can be naturally extended to tripartite Bell inequalities.
The graph $G$ is now upgraded to a hypergraph with the vertices representing observers and hyperedges connecting three observers testing a violation of the Bell inequality.
Fig.~\ref{FIG_MONO_42} presents examples of such hypergraphs.
The two Bell monogamy relations at the bottom of Fig.~\ref{FIG_MONO_42} list all possible ways two tripartite Bell parameters may overlap and results from the bipartite case suggest they might be of special importance. 
Using correlation complementarity, one easily verifies that the corresponding monogamy relations hold (see Appendix~\ref{APP_CC}):
\begin{eqnarray}
\mathcal{B}_{123}^2 + \mathcal{B}_{124}^2 & \leq & 4, \label{EQ_2_OVER} \\
\mathcal{B}_{123}^2 + \mathcal{B}_{145}^2 & \leq & 4. \label{EQ_3_OVER}
\end{eqnarray}
The averaging method naturally extends. But now in the line graph, an edge connects Bell inequalities that share at least one common observer.
Since the bound in both inequalities above is the same, the general monogamy relation is of the form (\ref{BI_MONO}) with the factor of $2$ on the right-hand side replaced by $4$.

\begin{figure}[t]
\includegraphics[width=0.35\textwidth]{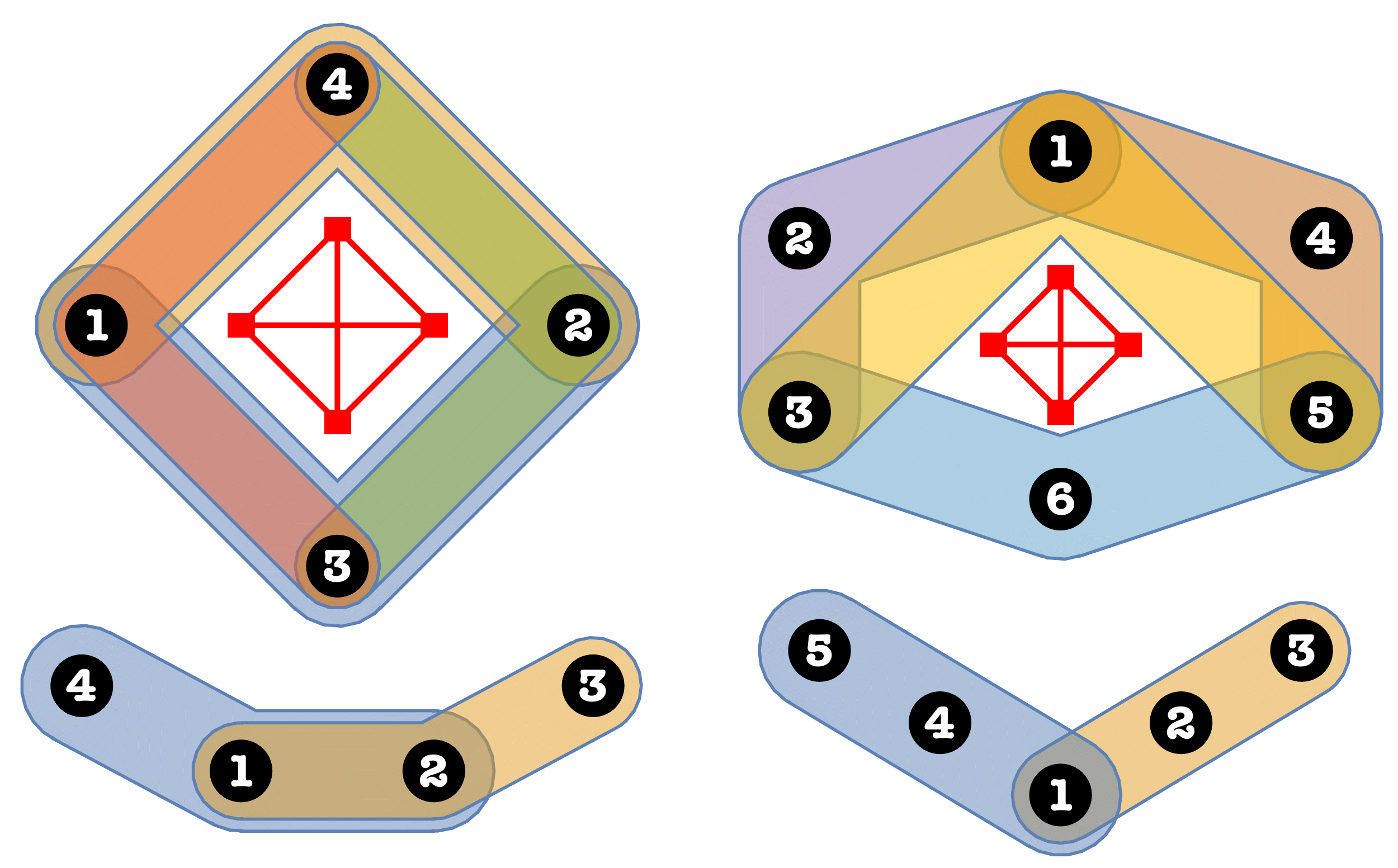}
\caption{Examples of hypergraphs describing Bell monogamy relations of tripartite Bell inequalities. The vertices correspond to qubits and the hyperedges, represented by different shaded areas (color online), are Bell inequalities.}
\label{FIG_MONO_42}
\end{figure}

Note, however, that the physical implication of monogamy relations (\ref{EQ_2_OVER}) and (\ref{EQ_3_OVER}) is different than that of the bipartite relation (\ref{3MON}).
In the bipartite case, whenever one Bell inequality is violated, the other has to be satisfied.
However, two tripartite Bell inequalities can be violated simultaneously.
There indeed exist quantum states and measurements which give rise to any value of the Bell parameters compatible with (\ref{EQ_2_OVER}) and (\ref{EQ_3_OVER})~\cite{PhysRevLett.106.180402,arXiv:1704.03790}.
This suggests that there are Bell monogamy relations stronger than the two listed above.

A concrete such Bell monogamy relation is presented at the top left corner of Fig.~\ref{FIG_MONO_42}.
It is a very condensed graph where four observers aim at testing four tripartite Bell inequalities.
The red graph is the line graph of the original hypergraph.
Since each $d_v = 3$ and there are $\epsilon = 6$ edges of $L$ in total, the averaging method predicts
$\mathcal{B}_{123}^2 + \mathcal{B}_{234}^2 + \mathcal{B}_{341}^2 + \mathcal{B}_{412}^2 \le 8$, whereas the tight bound is~\cite{PhysRevLett.106.180402}:
\begin{equation}
\mathcal{B}_{123}^2 + \mathcal{B}_{234}^2 + \mathcal{B}_{341}^2 + \mathcal{B}_{412}^2 \le 4.
\label{EQ_SQUARE}
\end{equation}
Accordingly, the monogamy relations (\ref{EQ_2_OVER}) and (\ref{EQ_3_OVER}) do not form a set of elementary relations from which tight monogamy relations can be derived using averaging method for all more complicated hypergraphs.

Since the monogamy relation (\ref{EQ_SQUARE}) involves four Bell parameters and it is bounded by $4$, it shares with relation (\ref{3MON}) its physical implication.
Namely, if one Bell inequality is violated, another must be satisfied.
It is therefore interesting to augment the set of monogamy relations $\{ (\ref{EQ_2_OVER}), (\ref{EQ_3_OVER})\}$ with inequality (\ref{EQ_SQUARE}) and verify which tight monogamy relations follow from the averaging method.
In fact, since (\ref{EQ_2_OVER}) is a special case of (\ref{EQ_SQUARE}), it is sufficient to replace one with the other.
Likewise, (\ref{EQ_3_OVER}) is a special case of the following monogamy relation, presented at the top right corner of Fig.~\ref{FIG_MONO_42},
\begin{align}
\mathcal{B}_{123}^2 + \mathcal{B}_{145}^2 + \mathcal{B}_{135}^2 + \mathcal{B}_{356}^2 \le 4.
\label{EQ_THE_AVENGERS}
\end{align}
This leads the question whether a finite set of elementary monogamy relations exists, i.e. such a set that the averaging method produces tight monogamy relation for arbitrary hypergraphs.
Note that when adding \eqref{EQ_SQUARE} and \eqref{EQ_THE_AVENGERS} to the set of elementary relations, the line graph method must be suitably updated.
Since the relations \eqref{EQ_SQUARE} and \eqref{EQ_THE_AVENGERS} involve more than two Bell parameters, an edge in the line graph may connect more than two vertices. 
Therefore the line graph needs to be upgraded to a hypergraph.
In contrast to the bipartite case, there may be more than one line hypergraph for each original hypergraph. 
We shall take into account all possible line hypergraphs in the averaging method.

We may attempt to construct the set of elementary relations by a brute force algorithm searching over all hypergraphs with $n$ vertices and $h$ hyperedges, each covering $m$ vertices (see Appendix~\ref{APP_ALG}).
In principle, if this algorithm were to be run for infinitely large $n$, it returns a set of elementary monogamy relations $\mathcal{E}$.
We shall now argue that such a set must, in fact, be infinite, in stark contrast to the case of bipartite Bell inequalities.

%%%%%%%%%%%%%%%%%%%%%%%%%%%%%%%%%%%%%%%%%%%%%%%%%%%
%%%%%%%%%%%%%%%%%%%%%%%%%%%%%%%%%%%%%%%%%%%%%%%%%%%
%%%%%%%%%%%%%%%%%%%%%%%%%%%%%%%%%%%%%%%%%%%%%%%%%%%
%%%%%%%%%%%%%%%%%%%%%%%%%%%%%%%%%%%%%%%%%%%%%%%%%%%
%%%%%%%%%%%%%%%%%%%%%%%%%%%%%%%%%%%%%%%%%%%%%%%%%%%

\section{The infinite set}
We shall construct a set of hypergraphs with increasing number of vertices and show that their corresponding monogamy relations obtained using the averaging method are not tight,
even if the set $\mathcal{E}$ is composed of all elementary monogamy relations with smaller numbers of observers.
A part of the set is depicted in Fig.~\ref{FIG_MONO_C357}.
We consider cyclic hypergraphs $C_h$ that involve an odd number of Bell inequalities, $h = 3,5,\dots$, which are tested by $2 h$ observers. 

\begin{figure}[b]
\includegraphics[width=0.45	\textwidth]{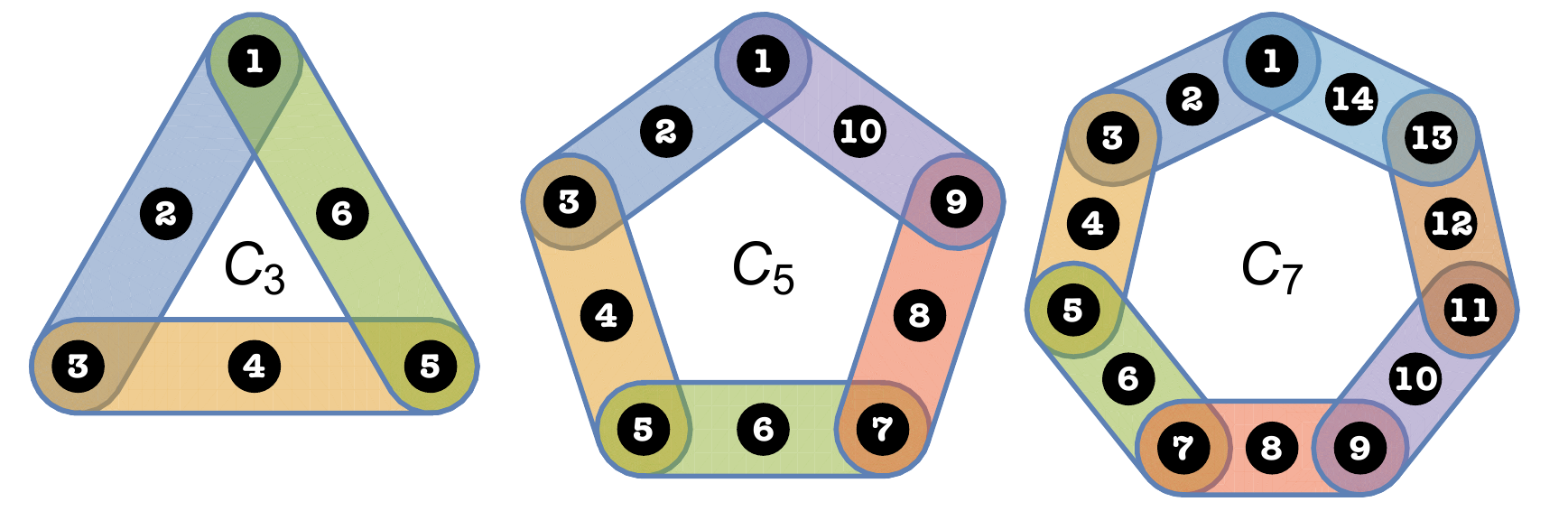}
\caption{First three cyclic hypergraphs we use to show that there is no finite set of elementary monogamy relations of tripartite Bell inequalities.}
\label{FIG_MONO_C357}
\end{figure}

Assume that the algorithm above returned a finite set $\mathcal{E}$ and let $n$ be the highest number of observers involved in any elementary monogamy relation in $\mathcal{E}$.
We begin our analysis with the graph $C_h$ that has the number of vertices higher than $n$.
In this way, we rule out the case that $C_h$ is (a subgraph of a hypergraph) already present in $\mathcal{E}$.
Therefore, the only way of obtaining a bound on the monogamy relation corresponding to $C_h$ is to combine graphs or subgraphs of monogamy relations in $\mathcal{E}$ that simultaneously are the subgraphs of $C_h$.
The only non-trivial subgraphs of $C_h$ are connected graphs involving $b$ consecutive Bell parameters.
The case of $b = 2$ is covered by the monogamy relation (\ref{EQ_3_OVER}), for which the bound is achieved, e.g. if the first three particles are in the Greenberger-Horne-Zeilinger (GHZ) state~\cite{GHZ}.
For any higher $b$, the corresponding monogamy relation has to have the bound of at least $2b$ as this is the number obtained if the triples of particles tested in every second Bell parameter are in the GHZ state.
Since our method is averaging these monogamy relations, it follows that the Bell monogamy relation corresponding to $C_h$ has the bound of at least $2h$.
A concrete example how the bound of $2 h$ is obtained is presented in Fig.~\ref{FIG_CONSEC}.

\begin{figure}[t]
\includegraphics[width=0.45\textwidth]{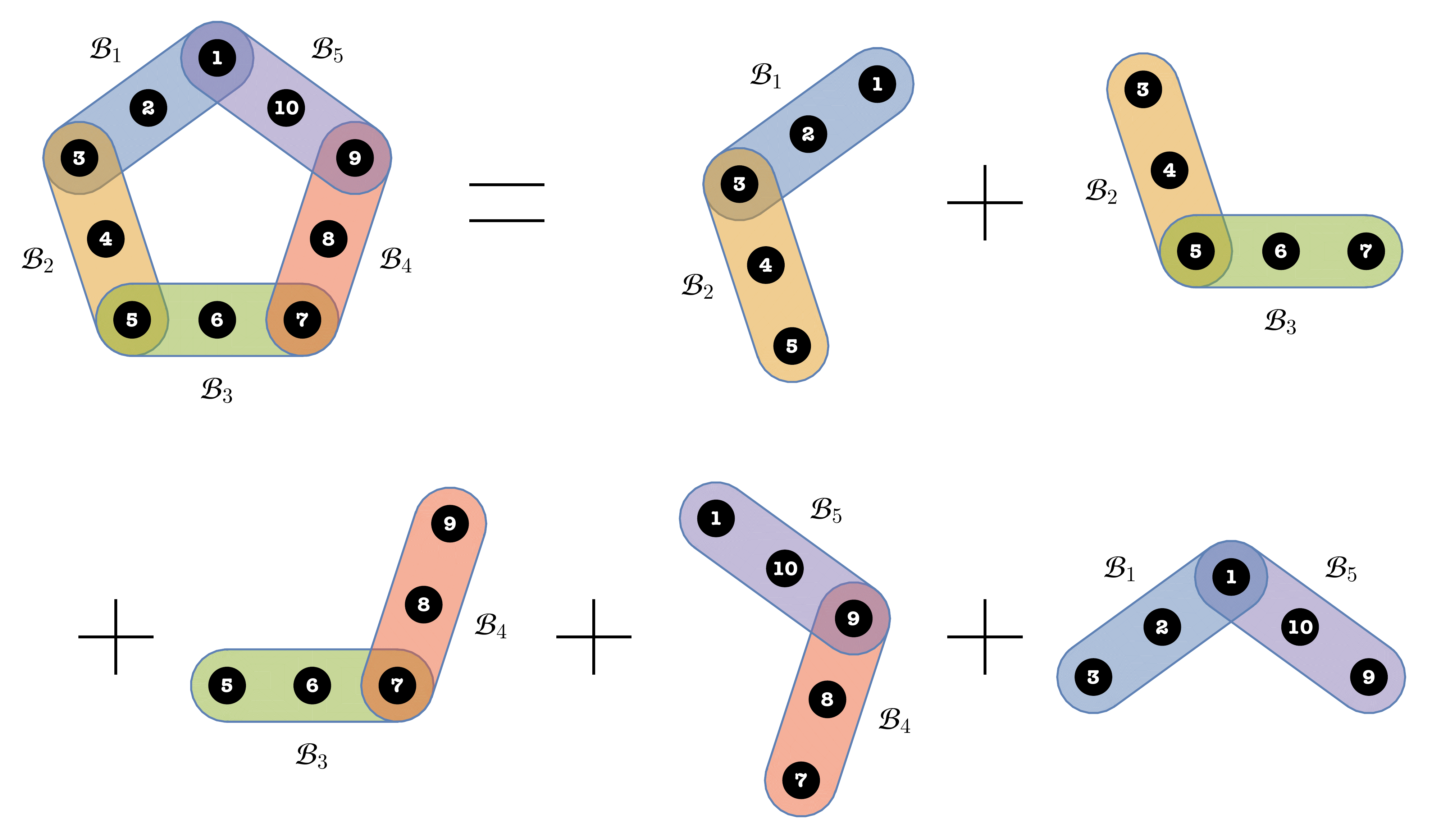}
\caption{Every elementary Bell monogamy relation to the right involves two consecutive Bell parameters and has the bound of $4$, see (\ref{EQ_3_OVER}). 
The line graph corresponding to the cyclic hypergraph on the left is a pentagon and leads to the monogamy relation
$\mathcal{B}_{1}^2 + \mathcal{B}_{2}^2 + \mathcal{B}_{3}^2 + \mathcal{B}_{4}^2 + \mathcal{B}_{5}^2 \le 10$.}
\label{FIG_CONSEC}
\end{figure}

We now show that the bound $2h$ is not tight, i.e. there is no quantum state and measurements achieving it (recall that $h$ is odd).
We point out properties that such a hypothetical state would have to satisfy and show that they are contradictory.
Let us label the Bell parameters in $C_h$ by index $j = 1, \dots,h$.
A way to obtain the bound $2h$ is presented in Fig.~\ref{FIG_CONSEC} and involves summing up pairs of consecutive Bell parameters $\mathcal{B}_{j}^2 + \mathcal{B}_{j+1}^2 \leq 4$, for $j = 1, \dots, h$, with $h+1 \equiv 1$.
Therefore, saturation of the bound of $2h$ implies saturation of every constituent monogamy, i.e. $\mathcal{B}_{j}^2 + \mathcal{B}_{j+1}^2 = 4$ for all $j$'s in question.
Recall that the bound $\mathcal{B}_{j}^2 + \mathcal{B}_{j+1}^2 \leq 4$ is proved by partitioning $16$ observables that enter the upper bound (\ref{T_BOUND_B}) into 4 groups, each of $4$ mutually anti-commuting observables (see Appendix~\ref{APP_CC}).
According to correlation complementarity, the constituent monogamy is saturated if each group of anti-commuting observables saturates the bound of $1$.
In particular, we have $\mathcal{X}_{j} + \mathcal{X}_{j+1} = 1$, where $\mathcal{X}_{j}$ is defined as
\begin{equation}
\mathcal{X}_{j} \equiv \avg{X_{2j-1} X_{2j} Y_{2j+1}}^2 + \avg{X_{2j-1}Y_{2j}Y_{2j+1}}^2,
\end{equation}
where e.g. $X_{2j}$ denotes Pauli-$x$ operator acting on the qubit $2j$.
Since there is an odd number of Bell parameters, each $\mathcal{X}_{j}$ must be exactly $1/2$.
It is perhaps worth a comment that one cannot proceed any further using correlation complementarity alone.
We shall now utilize the relation between correlations and marginal expectation values 
introduced previously in the context of non-local hidden variable theories~\cite{FoundPhys.33.1469,Nature.446.871,NaturePhys.4.681}.
We construct an observable that on one hand necessarily has high expectation value but on the other hand it must have small average 
as it anti-commutes with observables that enter $\mathcal{X}_{1}$ and $\mathcal{X}_{2}$, leading to a final contradiction.

Let us consider $\mathcal{X}_{3}$.
We introduce observable $M_6 = \alpha \, X_5 X_6 Y_7 + \beta \, X_5 Y_6 Y_7$, with normalized vector $(\alpha,\beta)$ parallel to $(\avg{X_{5} X_{6} Y_{7}},\avg{X_{5} Y_{6} Y_{7}})$.
It has expectation value $\avg{M_6} = 1/\sqrt{2}$, because $\mathcal{X}_{3} = 1/2$.
Similarly, we find observable $M_{2h}$ with expectation value $\avg{M_{2h}} = 1/\sqrt{2}$ following from $\mathcal{X}_{h} = 1/2$.
For $h \ge 5$ the two observables $M_6$ and $M_{2h}$ have no overlapping qubits and can be measured simultaneously. 
Their product satisfies the lower bound (see Appendix~\ref{APP_LEGGETT}):
\begin{equation}
\avg{M_6 M_{2h}}^2 \ge \left(|\avg{M_6}| + |\avg{M_{2h}}| - 1 \right)^2 = (\sqrt2-1)^2.
\label{EQ_MM_BOUND}
\end{equation}
At the same time one verifies that observable $M_6 M_{2h}$ together with observables entering $\mathcal{X}_{1}$ and $\mathcal{X}_{2}$ form a pairwise anti-commuting set. 
Therefore, by the correlation complementarity,
\begin{equation}
\mathcal{X}_{1} + \mathcal{X}_{2} + \avg{M_6 M_{2h}}^2 \le 1.
\label{EQ_XX_CC}
\end{equation}
Inequalities \eqref{EQ_MM_BOUND} and \eqref{EQ_XX_CC} contradict $\mathcal{X}_{1} = \mathcal{X}_{2} = 1/2$.
Summing up, there is no state and measurements which achieve the bound of $2h$ derived from the averaging method applied on a sequence of cyclic hypergraphs $C_h$.
Accordingly, the set of elementary tripartite monogamy relations $\mathcal{E}$ must contain an infinite number of monogamy relations --- at least those that correspond to all $C_h$'s with odd $h$.

\begin{figure}[t]
	\includegraphics[width=0.5\textwidth]{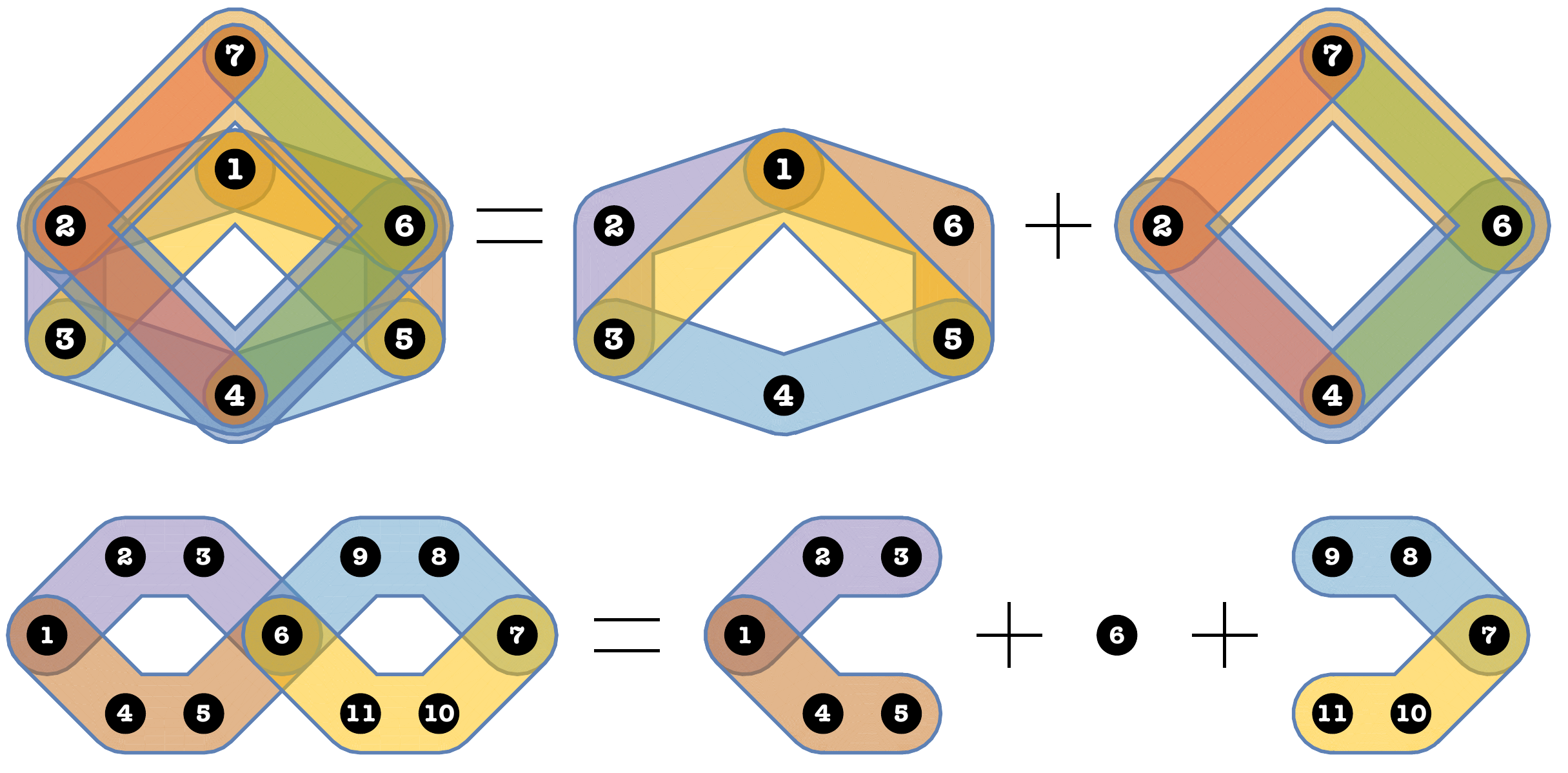}
	\caption{Different ways tight monogamy relations can be obtained from elementary monogamy relations. 
		In the upper row, a tight monogamy relation for the graph to the left, i.e. $\mathcal{B}_{123}^2+\mathcal{B}_{345}^2+\mathcal{B}_{561}^2+\mathcal{B}_{135}^2+\mathcal{B}_{246}^2+\mathcal{B}_{467}^2+\mathcal{B}_{672}^2+\mathcal{B}_{724}^2\leq 8$, is obtained from averaging two different elementary monogamy relations, namely \eqref{EQ_SQUARE} and \eqref{EQ_THE_AVENGERS}. 
		In the lower row, a tight monogamy relation of four-partite Bell inequalities is obtained from the elementary monogamy relations of tripartite Bell inequalities in Eq.~\eqref{EQ_3_OVER}. }
	\label{FIG_combine_2_diff}
\end{figure}

\section{Optimistic coda}
We showed a simple method to produce Bell monogamy relations for an arbitrary arrangement of observers and Bell inequalities.
For bipartite inequalities, the monogamy relations obtained are tight and leverage a single elementary monogamy relation derived in Ref.~\cite{arXiv:0611001}.
Using a combination of correlation complementarity and inequalities for marginal expectation values, we showed that in the multipartite case, however, there is no finite set of elementary monogamy relations from which tight relations corresponding to arbitrarily complicated graphs could be obtained.
Nevertheless, the method is useful as it does produce non-trivial tight Bell monogamy relations.
For example, a tight monogamy relation for a completely connected graph is obtained by averaging (\ref{EQ_SQUARE}) only,
combining elementary monogamy relations of different types can give tight relations, e.g. top of Fig.~\ref{FIG_combine_2_diff}, 
as well as tight relations for a higher number of observers, e.g. at the bottom of Fig.~\ref{FIG_combine_2_diff} and in the Appendix~\ref{APP_N}.
We hope our general method will boost further applications of Bell monogamy, especially in complex multiparty scenarios.

\begin{acknowledgments}
We thank Pawe{\l} Kurzy\'nski for discussions.
This work is supported by the Singapore Ministry of Education Academic Research Fund Tier 2 project MOE2015-T2-2-034.
M.C.T. acknowledges support from the NSF-funded Physics Frontier Center at the JQI and the QuICS Lanczos Graduate Fellowship.
R.R. acknowledges support from the research project ``Causality in quantum theory: foundations and applications'' of the Fondation Wiener-Anspach 
and from the Interuniversity Attraction Poles 5 program of the Belgian Science Policy Office under the grant IAP P7-35 photonics@be.
This work is supported by Singapore Ministry of Education Academic Research Fund Tier 3 (Grant No. MOE2012-T3-1-009).
\end{acknowledgments}

\appendix

%%%%%%%%%%%%%%%%%%%%%%%%%%%%%%%%%%%%%%%%%%%%%%%%%%%
%%%%%%%%%%%%%%%%%%%%%%%%%%%%%%%%%%%%%%%%%%%%%%%%%%%
%%%%%%%%%%%%%%%%%%%%%%%%%%%%%%%%%%%%%%%%%%%%%%%%%%%
%%%%%%%%%%%%%%%%%%%%%%%%%%%%%%%%%%%%%%%%%%%%%%%%%%%

\section{Correlation complementarity and Bell monogamy relations}
\label{APP_CC}

Here we give examples how correlation complementarity yields Bell monogamy relations. 
In particular, we detail groups of anti-commuting observables that lead to monogamy relations \eqref{3MON} and \eqref{EQ_3_OVER}, 
the latter used in the proof that the set of elementary tripartite relations is infinite.

Given a set of dichotomic, mutually anti-commuting observables 
$\left\{O_j\right\}$, correlation complementarity states that $\sum_j\avg{O_j}^2\leq 1$, where $\avg{O_j}=\Tr(\rho.O_j)$ are the expectation values in the state $\rho$~\cite{PhysRevLett.106.180402,PhysRevA.72.022340,JMathPhys.49.062105,NewJPhys.12.025009}.
In the first example, let us show how to use correlation complementarity to prove Eq.~\eqref{3MON}:
\begin{align}
\mathcal{B}_{12}^2+\mathcal{B}_{13}^2\leq 2.
\end{align}
The relevant observables that enter into the upper bound \eqref{T_BOUND_B} are $X_1X_2, X_1Y_2, Y_1X_2, Y_1Y_2$ (for $\mathcal{B}_{12}$) and $X_1X_3, X_1Y_3, Y_1X_3, Y_1Y_3$ (for $\mathcal{B}_{13}$).
These eight observables can be partitioned into two sets, namely $\left\{X_1X_2,X_1Y_2,Y_1X_3,Y_1Y_3\right\}$ and $\left\{X_1X_3,X_1Y_3,Y_1X_2,Y_1Y_2\right\}$, each containing only mutually anti-commuting observables. 
Eq.~\eqref{3MON} follows by direct application of correlation complementarity to these two sets.\\

Similarly, $\mathcal{B}_{123}^2+\mathcal{B}_{145}^2$ is upper bounded by the sum of squared expectation values of $16$ observables.
They can be arranged into $4$ groups of mutually anti-commuting observables, e.g. the $4$ columns of the following table:
\begin{equation}
\label{TAB_CC}
\centering
\begin{tabular}{|c|c|c|c|}
\hline
$X_1X_2X_3$ & $Y_1X_2Y_3$ & $X_1X_2Y_3$ & $Y_1X_2X_3$ \\ 
$X_1Y_2X_3$ & $Y_1Y_2Y_3$ & $X_1Y_2Y_3$ & $Y_1Y_2X_3$ \\ 
$Y_1X_4Y_5$ & $X_1X_4X_5$ & $Y_1X_4X_5$ & $X_1X_4Y_5$ \\  
$Y_1Y_4Y_5$ & $X_1Y_4X_5$ & $Y_1Y_4X_5$ & $X_1Y_4Y_5$ \\ 
\hline
\end{tabular}
\end{equation}
Correlation complementarity gives a bound of 1 for each group, and hence a bound of 4 for $\mathcal{B}_{123}^2+\mathcal{B}_{145}^2$.

%%%%%%%%%%%%%%%%%%%%%%%%%%%%%%%%%%%%%%%%%%%%%%%%%%%
%%%%%%%%%%%%%%%%%%%%%%%%%%%%%%%%%%%%%%%%%%%%%%%%%%%
%%%%%%%%%%%%%%%%%%%%%%%%%%%%%%%%%%%%%%%%%%%%%%%%%%%
%%%%%%%%%%%%%%%%%%%%%%%%%%%%%%%%%%%%%%%%%%%%%%%%%%%

\section{Configurations with quantum trade-off completely characterized by Steinmetz solids}
\label{APP_SEC_SOLID}

We shall give two examples of configurations where the intersection of the elementary monogamy relations precisely captures the trade-off relation within quantum theory.
The first example is the star configuration with arbitrary number of observers, see Fig.~\ref{APP_FIG_STAR}, and the second example involves four observers in the chain configuration, see Fig.~\ref{FIG_APP_CHAIN}.
The resulting quantum sets of allowed values of Bell parameters are intersections of cylinders having the same radii, the sets known as the Steinmetz solids.
We note that Toner and Verstraete already realized that the Steinmetz solid corresponding to the triangle configuration is \emph{not} the quantum set, it contains points which cannot be realized in quantum physics~\cite{arXiv:0611001}.

\subsection{Star configuration}
\label{APP_SEC_STAR}

\begin{figure}[b]
	\includegraphics[width=0.45\textwidth]{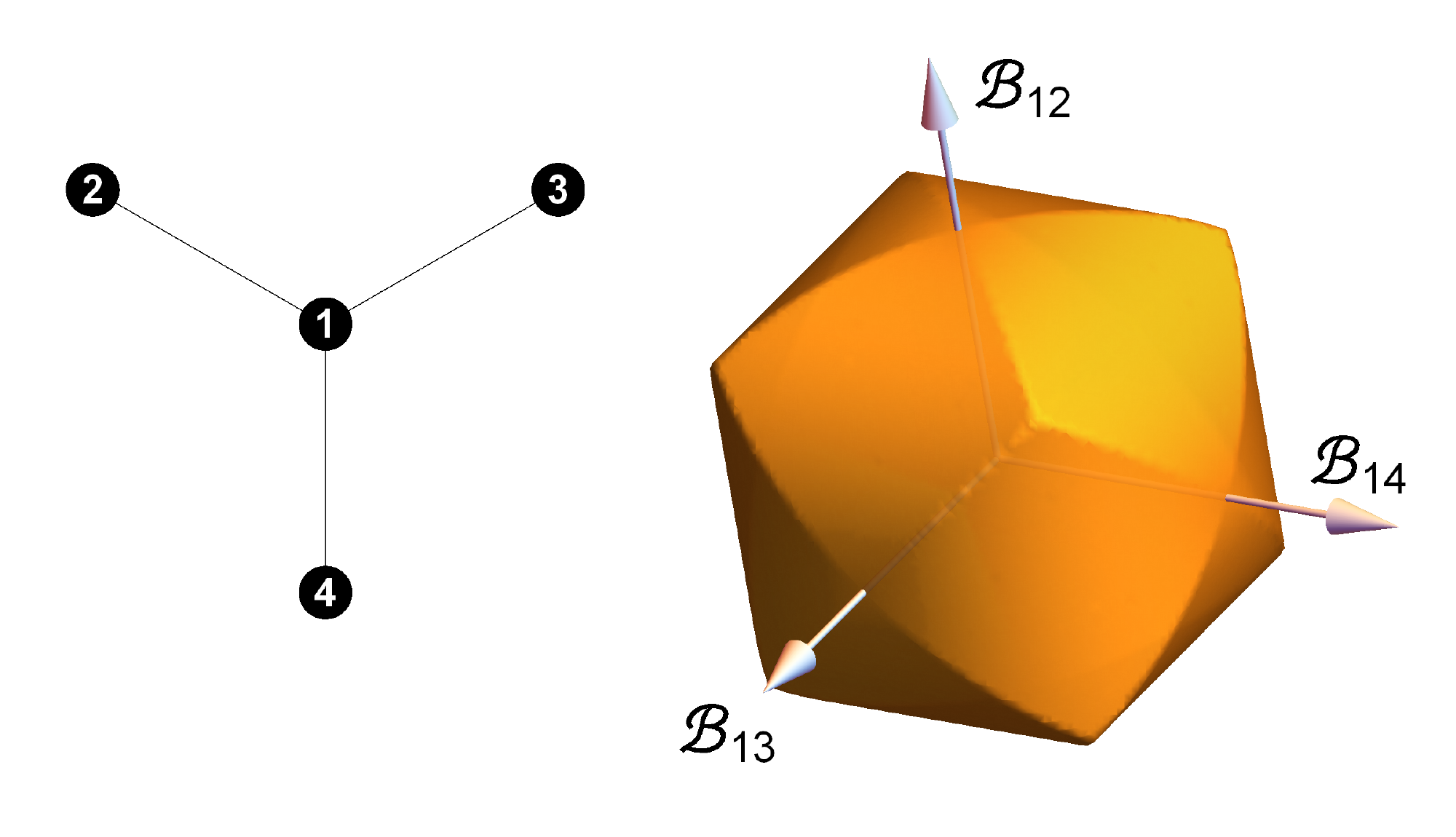}
	\caption{A star configuration of four observers (left) and the corresponding Steinmetz solid (right).
		All the points within the solid, and only those points, are achievable within quantum theory.
		Similar solids fully characterize the quantum set of Bell parameters for arbitrary number of satellite parties with common first observer.}
	\label{APP_FIG_STAR}
\end{figure}

Consider $n+1$ observers arranged in a star network with a central Alice and $n$ remaining observers, Fig.~\ref{APP_FIG_STAR}.
We show that any set of values $\mathcal{B}_{1i}$ that obeys $\mathcal{B}^2_{1j} + \mathcal{B}^2_{1k} \leq 2$ for $i, j, k \in \{2, \dots, n+1\}$ is achievable by a suitable shared quantum state and measurements.

Let us take a Cartesian coordinate system in $n$ dimensions with each axis giving the value of $\mathcal{B}_{1i}$.
Clearly, any point in the intersection of the cylinders $\mathcal{B}^2_{1j} + \mathcal{B}^2_{1k} \leq 2$ can have only one coordinate, say $\mathcal{B}_{12}$, larger than the local bound of $1$. 
At this point, the remaining Bell expressions can attain at most the value $\sqrt{2 - \mathcal{B}^2_{12}}$. The shared quantum state achieving these values is given as
\begin{eqnarray}
\label{eq:opt-state}
| \psi \rangle = (\alpha |00 \rangle + \beta |11 \rangle)_{12} | 0 \rangle_3 \dots | 0 \rangle_{n+1}
\end{eqnarray}   
with $\alpha = \frac{1}{\sqrt{2}} \sqrt{1+\sqrt{2} \sin t}$ and $\beta = \frac{1}{\sqrt{2}} \sqrt{1 - \sqrt{2} \sin t}$, with a parameter $0 \leq t \leq \frac{\pi}{4}$.
The relevant correlations are in the $xz$ plane and give
\begin{eqnarray}
&&\mathcal{B}^2_{12} = \sum_{k_1=x,z} \sum_{k_{2} = x,z} T^2_{k_1,k_2} = 2 \cos^2{t}, \nonumber \\
&&\mathcal{B}^2_{1j} = \sum_{k_1=x,z} \sum_{k_{j} = x,z} T^2_{k_1,k_j} = 2 \sin^2{t}, \quad \forall j > 2.
\end{eqnarray}
For the range $0 \leq t \leq \frac{\pi}{4}$, we see that $\mathcal{B}_{12}$ violates the local bound of $1$, and every relation of the form $\mathcal{B}^2_{12} + \mathcal{B}^2_{1j} \leq 2$ is saturated. 

Moreover, there is a freedom to control the parameters $\mathcal{B}_{1j}$ for $j > 2$ such that not all of them achieve the maximum possible value of $2 \sin^2{t}$ in this situation. 
To reduce the value of any individual $\mathcal{B}_{1j}$, we replace in (\ref{eq:opt-state}) the state $|0 \rangle_j$ at position $j$ by the noisy state $p_j |0 \rangle_j \langle 0| + \frac{1-p_j}{2} \mathbb{I}$, where $\mathbb{I}$ denotes the $2 \times 2$ identity matrix.
The new state then gives
\begin{eqnarray}
&&\mathcal{B}^2_{12} = \sum_{k_1=x,z} \sum_{k_{2} = x,z} T^2_{k_1,k_2} = 2 \cos^2{t}, \nonumber \\
&&\mathcal{B}^2_{1k} = \sum_{k_1=x,z} \sum_{k_{j} = x,z} T^2_{k_1,k_j} = 2 \sin^2{t}, \quad \forall k > 2, j \neq k \nonumber \\
&&\mathcal{B}^2_{1j} = \sum_{k_1=x,z} \sum_{k_{2} = x,z} T^2_{k_1,k_2} = 2 p_j^2 \sin^2{t}.
\end{eqnarray}
By controlling the noise levels $p_j$ we see that the strategy allows to achieve every possible point within the Steinmetz solid.
The intersection of the cylinders is therefore precisely the shape of the set of $n+1$-party quantum correlations projected onto the space of two-party (CHSH) Bell parameters.   

\subsection{Chain configuration}

\begin{figure}[]
	\includegraphics[width=0.45\textwidth]{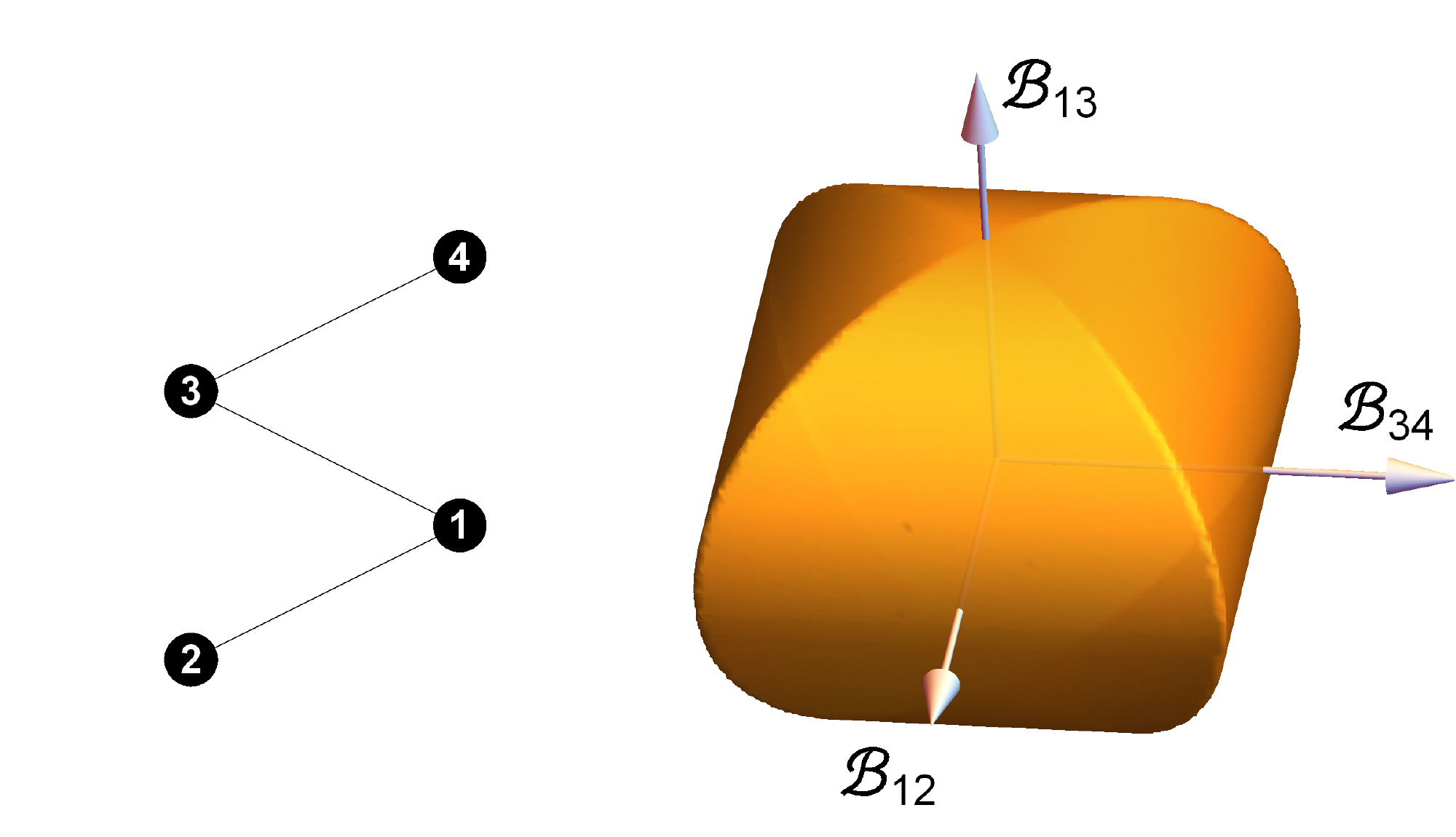}
	\caption{A chain configuration of four observers (left) and the corresponding Steinmetz solid (right).
		This solid fully characterizes allowed values of the three Bell parameters.}
	\label{FIG_APP_CHAIN}
\end{figure}

Here we provide another example where the Steinmetz solid completely characterizes quantum trade-offs between violations of Bell inequalities.
The configuration is illustrated in Fig.~\ref{FIG_APP_CHAIN} and involves three bipartite Bell parameters $\mathcal{B}_{12}, \mathcal{B}_{13}$ and $\mathcal{B}_{34}$, and two elementary monogamy relations:
\begin{eqnarray}
\label{eq:two-elem-mono}
\mathcal{B}^2_{12} + \mathcal{B}^2_{13} \leq 2, \nonumber \\
\mathcal{B}^2_{13} + \mathcal{B}^2_{34} \leq 2.
\end{eqnarray} 
The intersection of the two cylinders above can be split into one of the following parameter regions: 
\begin{enumerate}
	\item[(P1)] $\mathcal{B}_{12}, \mathcal{B}_{34} > 1$, $\mathcal{B}_{13} \leq 1$;
	\item[(P2)] $\mathcal{B}_{12} > 1$, $\mathcal{B}_{34}, \mathcal{B}_{13} \leq 1$;
	\item[(P3)] $\mathcal{B}_{12}, \mathcal{B}_{34} \leq 1$, $\mathcal{B}_{13} > 1$;
	\item[(P4)] $\mathcal{B}_{12}, \mathcal{B}_{13} \leq 1$, $\mathcal{B}_{34} > 1$; 
	\item[(P5)] $\mathcal{B}_{12}, \mathcal{B}_{13}, \mathcal{B}_{34} \leq 1$.
\end{enumerate}
We shall analyze them one by one.
Note that cases (P2) and (P4) are obtained by symmetric interchange of qubits $1 \leftrightarrow 3$ and $2 \leftrightarrow 4$, 
while case (P5) is the region within the local set which is evidently realizable within quantum theory. 

\textbf{Case (P1)} To realize the region defined by (P1), we consider the correlations in the $xz$ plane of the state
\begin{eqnarray*}
| \psi_1 \rangle = \alpha_1 | 0000 \rangle + \alpha_2 | 1111 \rangle + \alpha_3 |1100 \rangle + \alpha_4 |0011 \rangle, 
\end{eqnarray*} 
with $\alpha_i \in \mathbb{R}$ and $\sum_{i=1}^4 \alpha_i^2 = 1$. 
The values of Bell parameters are: 
\begin{eqnarray}
\mathcal{B}^2_{12} &=& 1 + 4\left(\alpha_1 \alpha_3 + \alpha_2 \alpha_4 \right)^2, \nonumber \\
\mathcal{B}^2_{34} &=& 1 + 4 \left( \alpha_2 \alpha_3 + \alpha_1 \alpha_4 \right)^2, \nonumber \\
\mathcal{B}^2_{13} &=& \left(1 - 2 \alpha^2_3 - 2 \alpha^2_4 \right)^2.
\end{eqnarray}
Considering without loss of generality the region where $\mathcal{B}_{12}, \mathcal{B}_{13}, \mathcal{B}_{34} \geq 0$, we choose the following real coefficients 
\begin{eqnarray}
\label{eq:alpha-val}
&&\alpha_1 = \sqrt{\frac{1+\mathcal{B}_{13}}{2}} \cos \phi, \quad \alpha_2 = \sqrt{\frac{1+\mathcal{B}_{13}}{2}} \sin \phi, \nonumber \\
&&\alpha_3 = \sqrt{\frac{1-\mathcal{B}_{13}}{2}} \cos \theta, \quad \alpha_4 = \sqrt{\frac{1-\mathcal{B}_{13}}{2}} \sin \theta,
\end{eqnarray}
for angles $\theta, \phi$ given as
\begin{eqnarray}
\label{eq:theta-phi}
\theta &=& \frac{1}{2} \left[ \arccos \left(\sqrt{\frac{\mathcal{B}^2_{12}-1}{1-\mathcal{B}^2_{13}}} \right) + \arcsin \left(\sqrt{\frac{\mathcal{B}^2_{34} - 1}{1 - \mathcal{B}^2_{13}}} \right) \right], \nonumber \\
\phi &=& \frac{1}{2} \left[ \arcsin \left(\sqrt{\frac{\mathcal{B}^2_{34}-1}{1-\mathcal{B}^2_{13}}} \right) - \arccos \left(\sqrt{\frac{\mathcal{B}^2_{12} - 1}{1 - \mathcal{B}^2_{13}}} \right) \right], \nonumber \\
\end{eqnarray}
which are well-defined for any values $1 < \mathcal{B}_{12}, \mathcal{B}_{34} \leq \sqrt{2}$ and $0 \leq \mathcal{B}_{13} < 1$. 
For $\mathcal{B}_{13} = 1$, Eq.~(\ref{eq:alpha-val}) gives $\alpha_3 = \alpha_4 = 0$. 

\textbf{Case (P2)} The region (P2) is realized by the state
\begin{eqnarray*}
	\rho_2 & = & \proj{\psi_{13}} \otimes \proj{0} \otimes \left(p | 0 \rangle \langle 0 | + \frac{1-p}{2} \mathbb{I} \right) \\
	& + & \proj{\psi_{42}} \otimes \proj{1} \otimes \left(p | 1 \rangle \langle 1 | + \frac{1-p}{2} \mathbb{I} \right),
\end{eqnarray*}
where
\begin{equation*}
\ket{\psi_{jk}} = \alpha_j \ket{00} + \alpha_k \ket{11},
\end{equation*}
with $p, \alpha_i \in \mathbb{R}$, $\sum_{i=1}^{4} \alpha_i^2 = 1$ and $0 \leq p \leq 1$.  
The corresponding Bell parameters calculated in the $xz$ plane are:
\begin{eqnarray}
\mathcal{B}^2_{12} &=& 1 + 4\left(\alpha_1 \alpha_3 + \alpha_2 \alpha_4 \right)^2, \nonumber \\
\mathcal{B}^2_{34} &=& p^2, \nonumber \\
\mathcal{B}^2_{13} &=& \left(1 - 2 \alpha^2_3 - 2 \alpha^2_4 \right)^2.
\end{eqnarray}
Choosing the values of $\alpha_i$ as in Eq.~(\ref{eq:alpha-val}) with the parameters $\theta, \phi$ given in Eq.~(\ref{eq:theta-phi}) allows the realization of any $\mathcal{B}_{12} > 1$ and $\mathcal{B}_{34}, \mathcal{B}_{13} \leq 1$. 

\textbf{Case (P3)} The region (P3) is realized by a state similar to that defined in Sec.~\ref{APP_SEC_STAR}, about the star configuration. 
We compute the values of the Bell parameters from the correlations in the $xz$ plane for the state
\begin{eqnarray}
|\psi_3 \rangle = \alpha |0000 \rangle + \beta |1010 \rangle,
\end{eqnarray}
with $\alpha = \frac{1}{\sqrt{2}} \sqrt{1+\sqrt{2} \sin t}$ and $\beta = \frac{1}{\sqrt{2}} \sqrt{1 - \sqrt{2} \sin t}$, and parameter $0 \leq t \leq \frac{\pi}{4}$. 
This gives
\begin{eqnarray}
\mathcal{B}^2_{13} &=& 2 \cos^2 t, \nonumber \\
\mathcal{B}^2_{12} &=& \mathcal{B}^2_{34} = 2 \sin^2 t.
\end{eqnarray}
To obtain smaller values of $\mathcal{B}_{12}, \mathcal{B}_{34}$, we add noise to the qubits at positions $2$ and $4$ as in Sec.~\ref{APP_SEC_STAR}.

%%%%%%%%%%%%%%%%%%%%%%%%%%%%%%%%%%%%%%%%%%%%%%%%%%%
%%%%%%%%%%%%%%%%%%%%%%%%%%%%%%%%%%%%%%%%%%%%%%%%%%%
%%%%%%%%%%%%%%%%%%%%%%%%%%%%%%%%%%%%%%%%%%%%%%%%%%%
%%%%%%%%%%%%%%%%%%%%%%%%%%%%%%%%%%%%%%%%%%%%%%%%%%%

\section{Counting all elementary monogamy relations}
\label{APP_ALG}

This section details a simple brute force algorithm to list all elementary monogamy relations.
We shall focus on $m=3$, in which case the first nontrivial graph has $n = 4$ and $h = 2$, giving rise to monogamy relation (\ref{EQ_2_OVER}).
The set of elementary monogamy relations is denoted as $\mathcal{E}$ and we now add to it the first member given by (\ref{EQ_2_OVER}).
The algorithm then enters a loop as follows:
\begin{itemize}
	
	\item[$\triangleright$] construct all the hypergraphs with $n$ vertices and $h$ hyperedges
	
	\item[$\triangleright$] let index $j$ loop over all these hypergraphs, and let $G_j$ denote the hypergraph corresponding to $j$
	
	\item[$\triangleright$] for each $j$ find the bound on the monogamy relation $M_j$, corresponding to $G_j$, using averaging of the elementary relations in $\mathcal{E}$
	
	\item[$\triangleright$] check if the bound is tight
	
	\begin{itemize}
		\item[$\diamond$] if it is tight, move on to the next $j$
		\item[$\diamond$] if it is not tight, add $M_j$ to $\mathcal{E}$ and remove from $\mathcal{E}$ all monogamy relations that correspond to the subgraphs of $G_j$ and have the same bound as $M_j$
	\end{itemize}
	
	\item[$\triangleright$] loop over $h$ [its maximum number is $n \choose 3$], then loop over $n$
\end{itemize}

%%%%%%%%%%%%%%%%%%%%%%%%%%%%%%%%%%%%%%%%%%%%%%%%%%%
%%%%%%%%%%%%%%%%%%%%%%%%%%%%%%%%%%%%%%%%%%%%%%%%%%%
%%%%%%%%%%%%%%%%%%%%%%%%%%%%%%%%%%%%%%%%%%%%%%%%%%%
%%%%%%%%%%%%%%%%%%%%%%%%%%%%%%%%%%%%%%%%%%%%%%%%%%%

\section{Correlations versus local expectation values}
\label{APP_LEGGETT}

For completeness we derive the inequality presented in~\cite{FoundPhys.33.1469,Nature.446.871,NaturePhys.4.681}, which leads to Eq.~\eqref{EQ_MM_BOUND}.
\begin{lemma}
	Consider dichotomic observables $\hat A$, $\hat B$ measured on different sets of particles.
	If $| \langle \hat A \rangle | + | \langle \hat B \rangle | \geq 1$, then
	\begin{align}
	\langle \hat A \otimes \hat B \rangle^2 \geq \left(| \langle \hat A \rangle | + | \langle \hat B \rangle | - 1 \right)^2.
	\label{EQ_APP_B1}
	\end{align}
\end{lemma}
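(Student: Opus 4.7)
The plan is to exploit two elementary facts about the operators in the lemma: $\hat A$ and $\hat B$ act on disjoint sets of particles, so they commute and share an eigenbasis; and each has eigenvalues in $\{+1,-1\}$ because it is dichotomic. Consequently the Hermitian operator $(\identity-\hat A)\otimes(\identity-\hat B)$ has eigenvalues in $\{0,4\}$ and is therefore positive semidefinite. A single application of this positivity, together with an obvious sign-flip symmetry, will suffice.

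First I would reduce to the case $\langle\hat A\rangle,\langle\hat B\rangle\ge 0$. Replacing $\hat A\to -\hat A$ flips the signs of $\langle\hat A\rangle$ and $\langle\hat A\otimes\hat B\rangle$ but leaves $|\langle\hat A\rangle|$ and $\langle\hat A\otimes\hat B\rangle^2$ untouched; the analogous symmetry holds for $\hat B$. Neither the hypothesis nor the conclusion \eqref{EQ_APP_B1} is affected by these sign flips, so I may assume that $\langle\hat A\rangle$ and $\langle\hat B\rangle$ already equal their own absolute values. Taking the expectation value of the positive operator identified above in the state $\rho$ then yields
\begin{equation*}
1-\langle\hat A\rangle-\langle\hat B\rangle+\langle\hat A\otimes\hat B\rangle\ge 0,
\end{equation*}
i.e.\ $\langle\hat A\otimes\hat B\rangle\ge|\langle\hat A\rangle|+|\langle\hat B\rangle|-1$. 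Under the hypothesis of the lemma the right-hand side is nonnegative, so squaring preserves the inequality and delivers \eqref{EQ_APP_B1}.

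The main obstacle is essentially absent: the whole argument hinges on spotting the correct positive operator, and the only subtlety is recognizing that the disjoint support of $\hat A$ and $\hat B$ is what makes the tensor-product factorization legitimate. A fully equivalent classical route is to assign joint outcome probabilities $p_{ab}$ for $a,b\in\{\pm 1\}$ to the commuting dichotomic observables $\hat A$ and $\hat B$ and verify the arithmetic identity $\langle\hat A\otimes\hat B\rangle-\langle\hat A\rangle-\langle\hat B\rangle+1=4p_{--}\ge 0$, which makes the nonnegativity manifest without invoking any operator inequalities.
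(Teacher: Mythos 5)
Your argument is correct, and it reaches the inequality by a somewhat different route than the paper. The paper works at the level of measurement outcomes: it starts from the pointwise identity $-1+|A+B| = AB = 1-|A-B|$ for $A,B=\pm1$, averages it, applies $\langle |{\cdot}|\rangle \ge |\langle \cdot\rangle|$ to get the two-sided bound $-1+|\langle A\rangle+\langle B\rangle| \le \langle AB\rangle \le 1-|\langle A\rangle-\langle B\rangle|$, and then splits into two cases according to whether $\langle A\rangle$ and $\langle B\rangle$ have equal or opposite signs, using the lower bound in the first case and the upper bound in the second before squaring. You instead quotient out the sign ambiguity at the start via the substitutions $\hat A\to-\hat A$, $\hat B\to-\hat B$ (under which both hypothesis and conclusion are invariant), and then need only the single positivity statement $\bigl\langle (\identity-\hat A)\otimes(\identity-\hat B)\bigr\rangle \ge 0$, which is legitimate precisely because the two observables act on disjoint particles. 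The two proofs encode the same elementary fact --- nonnegativity of the joint outcome probabilities, your $4p_{--}\ge 0$ --- but your symmetry reduction collapses the paper's case analysis into one line and replaces the modulus manipulations by a manifestly positive operator, which is arguably cleaner; the paper's version has the minor virtue of staying entirely in the language of outcome statistics, which matches how the inequality was originally presented in the non-local hidden-variable literature it cites. One small point of care, which you do handle correctly: after the sign reduction you must check that the right-hand side $|\langle\hat A\rangle|+|\langle\hat B\rangle|-1$ is nonnegative before squaring, since only then does $x\ge c\ge 0$ imply $x^2\ge c^2$; this is exactly where the lemma's hypothesis enters, just as it does in the paper's proof.
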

\begin{proof}
	By assumption $\hat A$ and $\hat B$ can be measured simultaneously.
	Let us denote by $A = \pm 1$ and $B = \pm 1$ the measurement outcomes obtained in a single experimental run.
	They satisfy:
	\begin{align}
	-1+ |A + B| = A B = 1 - |A - B|.
	\end{align}
	Averaging over many runs of the experiment gives
	\begin{align}
	-1+ \avg{|A + B|}=\avg{A B} = 1-\avg{|A - B|}.
	\label{EQ_APP_B3}
	\end{align}
	Since the average of modulus is at least the modulus of average, we have
	\begin{align}
	-1+ |\avg{A} + \avg{B} | \leq \avg{A B} \leq 1 - |\avg{A} - \avg{B}|.
	\label{EQ_APP_B6}
	\end{align}
	If signs of $\avg{A}$ and $\avg{B}$ are the same, then $|\avg{A}+\avg{B}| = |\avg{A}|+|\avg{B}|$.
	The left inequality in Eq.~\eqref{EQ_APP_B6} then implies
	\begin{align}
	\avg{A B} \geq -1+|\avg{A}| + |\avg{B}|.
	\end{align}
	By our assumption both sides are nonnegative leading to Eq.~\eqref{EQ_APP_B1}.
	Similarly, if $\avg{A}$ and $\avg{B}$ have opposite signs, then $|\avg{A} - \avg{B}| = |\avg{A}|+|\avg{B}|$, and the right inequality in Eq.~\eqref{EQ_APP_B3} becomes
	\begin{align}
	\avg{A B} \leq 1 - |\avg{A}| - |\avg{B}|.
	\end{align}
	By our assumption, this time both sides are non-positive. 
	Multiplying the inequality with itself results in Eq.~\eqref{EQ_APP_B1}.
	In~\eqref{EQ_APP_B1} we use the operator notation to stress that $\hat A$ and $\hat B$ are measured on different particles.
\end{proof}

%%%%%%%%%%%%%%%%%%%%%%%%%%%%%%%%%%%%%%%%%%%%%%%%%%%
%%%%%%%%%%%%%%%%%%%%%%%%%%%%%%%%%%%%%%%%%%%%%%%%%%%
%%%%%%%%%%%%%%%%%%%%%%%%%%%%%%%%%%%%%%%%%%%%%%%%%%%
%%%%%%%%%%%%%%%%%%%%%%%%%%%%%%%%%%%%%%%%%%%%%%%%%%%

\section{Monogamy relations for higher number of observers}
\label{APP_N}

Here we show how to combine $m$-partite Bell monogamy relations, obtained from correlation complementarity, to get tight $(m+1)$-partite relations.

We start from the monogamy relation for three-party inequalities in Eq.~(\ref{EQ_SQUARE}), and use it to derive monogamy between four-party inequalities. 
Recall Fig.~\ref{FIG_MONO_42}, which presents configuration corresponding to Eq.~(\ref{EQ_SQUARE}). 
We construct the network for four-party inequalities by adding a new party, labeled `0', who takes part in the four-party Bell experiment with each of the sets in $\{(123), (234), (341), (412)\}$, as well as with a copy $\{(\tilde{1}\tilde{2}\tilde{3}), (\tilde{2}\tilde{3}\tilde{4}), (\tilde{3}\tilde{4}\tilde{1}), (\tilde{4}\tilde{1}\tilde{2})\}$. 
We derive the following tight monogamy relation from Eq.~(\ref{EQ_SQUARE}), i.e. the "square" graph:
\begin{eqnarray}
&&\mathcal{B}_{0123}^2 + \mathcal{B}_{0234}^2 + \mathcal{B}_{0341}^2 + \mathcal{B}_{0412}^2 + \nonumber \\ 
&&\mathcal{B}_{0\tilde{1}\tilde{2}\tilde{3}}^2 + \mathcal{B}_{0\tilde{2}\tilde{3}\tilde{4}}^2 + \mathcal{B}_{0\tilde{3}\tilde{4}\tilde{1}}^2 + \mathcal{B}_{0\tilde{4}\tilde{1}\tilde{2}}^2\le 8.
\label{EQ_SQUARE_2}
\end{eqnarray} 
To this end, we first recall that Eq.~(\ref{EQ_SQUARE}) was proven by grouping the relevant observables into the following anti-commuting sets
\begin{equation}
\label{TAB_CC_1}
\centering
\begin{tabular}{|c|c|c|c|}
\hline
$X_1X_2Y_3$ & $X_1Y_2X_3$ & $Y_1X_2X_3$ & $Y_1Y_2Y_3$ \\ 
$X_1Y_2X_4$ & $Y_1Y_2Y_4$ & $X_1X_2Y_4$ & $Y_1X_2X_4$ \\ 
$X_1X_3Y_4$ & $Y_1X_3X_4$ & $Y_1Y_3Y_4$ & $X_1Y_3X_4$ \\  
$Y_2Y_3Y_4$ & $X_2X_3Y_4$ & $X_2Y_3X_4$ & $Y_2X_3X_4$ \\ 
$(X_i \leftrightarrow Y_i)$ & $(X_i \leftrightarrow Y_i)$ & $(X_i \leftrightarrow Y_i)$ & $(X_i \leftrightarrow Y_i)$ \\
\hline
\end{tabular}
\end{equation}
where the last line in the table indicates the four observables obtained from the previous lines by an interchange $X_i \leftrightarrow Y_i$ at each site $i$, i.e., $X_1X_2Y_3 \rightarrow Y_1Y_2X_3$, etc, making a total of eight anti-commuting observables in each of the four sets. 
In order to prove Eq.~(\ref{EQ_SQUARE_2}) we form the requisite eight sets of $16$ anti-commuting observables by adjoining $X_0$ to each of the above sets and $Y_0$ to the corresponding sets for the $\tilde{i}$ qubits.  
\begin{equation}
\label{TAB_CC_2}
\centering
\begin{tabular}{|c|c|c|c|}
\hline
$X_0X_1X_2Y_3$ & $X_0X_1Y_2X_3$ & $X_0Y_1X_2X_3$ & $X_0Y_1Y_2Y_3$  \\ 
$X_0X_1Y_2X_4$ & $X_0Y_1Y_2Y_4$ & $X_0X_1X_2Y_4$ & $X_0Y_1X_2X_4$  \\ 
$X_0X_1X_3Y_4$ & $X_0Y_1X_3X_4$ & $X_0Y_1Y_3Y_4$ & $X_0X_1Y_3X_4$  \\  
$X_0Y_2Y_3Y_4$ & $X_0X_2X_3Y_4$ & $X_0X_2Y_3X_4$ & $X_0Y_2X_3X_4$ \\ 
$X_0(X_i \leftrightarrow Y_i)$ & $X_0(X_i \leftrightarrow Y_i)$ & $X_0(X_i \leftrightarrow Y_i)$ & $X_0(X_i \leftrightarrow Y_i)$ \\
$Y_0X_{\tilde{1}}X_{\tilde{2}}Y_{\tilde{3}}$ & $Y_0X_{\tilde{1}}Y_{\tilde{2}}X_{\tilde{3}}$ & $Y_0Y_{\tilde{1}}X_{\tilde{2}}X_{\tilde{3}}$ & $Y_0Y_{\tilde{1}}Y_{\tilde{2}}Y_{\tilde{3}}$  \\ 
$Y_0X_{\tilde{1}}Y_{\tilde{2}}X_{\tilde{4}}$ & $Y_0Y_{\tilde{1}}Y_{\tilde{2}}Y_{\tilde{4}}$ & $Y_0X_{\tilde{1}}X_{\tilde{2}}Y_{\tilde{4}}$ & $Y_0Y_{\tilde{1}}X_{\tilde{2}}X_{\tilde{4}}$ \\ 
$Y_0X_{\tilde{1}}X_{\tilde{3}}Y_{\tilde{4}}$ & $Y_0Y_{\tilde{1}}X_{\tilde{3}}X_{\tilde{4}}$ & $Y_0Y_{\tilde{1}}Y_{\tilde{3}}Y_{\tilde{4}}$ & $Y_0X_{\tilde{1}}Y_{\tilde{3}}X_{\tilde{4}}$ \\  
$Y_0Y_{\tilde{2}}Y_{\tilde{3}}Y_{\tilde{4}}$ & $Y_0X_{\tilde{2}}X_{\tilde{3}}Y_{\tilde{4}}$ & $Y_0X_{\tilde{2}}Y_{\tilde{3}}X_{\tilde{4}}$ & $Y_0Y_{\tilde{2}}X_{\tilde{3}}X_{\tilde{4}}$ \\ 
$Y_0(X_{\tilde{i}} \leftrightarrow Y_{\tilde{i}})$ & $Y_0(X_{\tilde{i}} \leftrightarrow Y_{\tilde{i}})$ & $Y_0(X_{\tilde{i}} \leftrightarrow Y_{\tilde{i}})$ & $Y_0(X_{\tilde{i}} \leftrightarrow Y_{\tilde{i}})$ \\
\hline
\end{tabular}
\end{equation}
In addition to the above four sets, we obtain four more sets of $16$ anti-commuting observables from each of the above sets by interchanging $X_0 \leftrightarrow Y_0$, i.e., $X_0X_1X_2X_3 \rightarrow Y_0X_1X_2X_3$, etc. 

It is readily verified that the observables within each column in Tab.~\ref{TAB_CC_2} anti-commute. 
Moreover, the construction extends so that starting from any network where one has derived a monogamy relation from correlation complementarity for $m$-party inequalities, one can obtain a tight monogamy relation for ($m+1$)-party inequalities in a ``tree'' network, with a central qubit $0$ and with the two ``leaves" corresponding to the original $m$-party network.
Furthermore, the derived inequalities are tight and completely characterize the quantum trade-off relation, 
which is ensured by considering the correlations in the $xy$ plane of the following state \cite{PhysRevLett.106.180402,arXiv:1704.03790}:
\begin{eqnarray}
|\psi \rangle = \frac{1}{\sqrt{2}}\sum_{e} \alpha_e | \underbrace{0 \dots 0}_{e} 1 \dots 1 \rangle + \frac{1}{\sqrt{2}} \ket{1 \dots 1},
\end{eqnarray}  
where $\alpha_e \in \mathbb{R}$ and normalized, $e$ denotes the hyperedge and $|1 \dots 1 \rangle$ denotes the all-$1$ state on the (remaining) qubits in the network.

\bibliographystyle{apsrev4-1}
%\bibliography{Bell_monogamy_network}

%merlin.mbs apsrev4-1.bst 2010-07-25 4.21a (PWD, AO, DPC) hacked
%Control: key (0)
%Control: author (72) initials jnrlst
%Control: editor formatted (1) identically to author
%Control: production of article title (-1) disabled
%Control: page (0) single
%Control: year (1) truncated
%Control: production of eprint (0) enabled
%

\end{document}